\documentclass{article}

\usepackage{microtype}
\usepackage{graphicx}
\usepackage{subcaption}
\usepackage{booktabs} 
\usepackage{hyperref}

\usepackage[accepted]{icml2026}

\usepackage{amsmath}
\usepackage{amssymb}
\usepackage{mathtools}
\usepackage{amsthm}

\usepackage[capitalize,noabbrev]{cleveref}

\theoremstyle{plain}
\newtheorem{theorem}{Theorem}[section]

\theoremstyle{definition}

\theoremstyle{remark}

\usepackage[textsize=tiny]{todonotes}

\newcommand{\E}{\mathbb{E}}

\usepackage{xcolor}

\icmltitlerunning{SWE-Router}

\begin{document}

\twocolumn[
  \icmltitle{SWE-Router: Routing in Multi-turn Agentic Software Engineering Tasks}

  \icmlsetsymbol{equal}{*}

  \begin{icmlauthorlist}
    \icmlauthor{Seongho Son}{uclcs}
    \icmlauthor{Sangwoong Yoon}{unist}
    \icmlauthor{Jiahua Tang}{psl}
    \icmlauthor{Shuhan Wang}{uclcs}
    \icmlauthor{Lorenz Wolf}{uclcs}
    \icmlauthor{Ilija Bogunovic}{uclcs,basel}
  \end{icmlauthorlist}

  \icmlaffiliation{uclcs}{University College London, United Kingdom}
  \icmlaffiliation{unist}{Ulsan National Institute of Science and Technology, South Korea}
  \icmlaffiliation{psl}{PSL Research University, France}
  \icmlaffiliation{basel}{University of Basel, Switzerland}

  \icmlcorrespondingauthor{Seongho Son}{seong.son.22@ucl.ac.uk}

  \icmlkeywords{Software Engineering, Large Language Models, Routing}

  \vskip 0.3in
]

\printAffiliationsAndNotice{}  

\begin{abstract}
Large language models (LLMs) embedded in multi-turn agentic harnesses are reshaping software engineering (SWE), but routing every task to a frontier model is wasteful when many issues admit cheap fixes. Existing LLM routers operate on the task description alone, which inherits an information-theoretic Bayes-error floor in agentic settings: a similar issue can hide either a localized typo or a multi-module refactor, and the prompt does not separate the two. 
We introduce \textbf{SWE-Router}, a value-based \emph{temporal} approach that lets a cheap model run for a few exploratory turns and reads the resulting partial trajectory before deciding whether to continue cheaply or to escalate to an expensive model. We provide a Bayes-optimality theorem showing that conditioning on the partial trajectory never harms routing and is strictly better whenever exploration is informative.
Across the LLM pairs of weak and strong models spanning the contemporary cost--capability frontier, we show that SWE-Router greatly improves the cost efficiency of SWE tasks, while maintaining the majority of the performances of the stronger model. We additionally release a multi-LLM trajectory dataset which allows reproduction of our trajectory-level routing. \looseness=-1
\end{abstract}

\section{Introduction}

Large language models (LLMs) embedded in multi-turn agentic harnesses~\citep{NEURIPS2024_5a7c9475, wang2024openhands, xia2024agentless, zhang2024autocoderover} are achieving state-of-the-art resolution rates on repository-scale coding benchmarks~\citep{liu2023repobench, jimenez2024swebench, zhuo2024bigcodebench}. However, their per-task inference cost exceeds that of competitive open-weight alternatives by an order of magnitude or more, while only a minority of the existing tasks truly require frontier capability. Routing every instance to a frontier model is therefore wasteful in expectation, which motivates \emph{LLM routing}~\cite{ong2025routellm, chen2024frugalgpt, jitkrittum2025universal, hu2024routerbench}: selecting a model per instance so that expensive inference is reserved for cases in which it is necessary.\looseness=-1

Existing routers operate as classifiers or calibrated probability estimators over the task description $q$~\cite{ong2025routellm, jitkrittum2025universal, ding2024hybrid, aggarwal2023automix, hu2024routerbench}, picking the cheapest model whose cost-adjusted success probability is highest. This is poorly matched to multi-turn agentic SWE: a similar issue description can specify a localized typo or a multi-module refactor, and the distinction is often not identifiable from $q$ alone. 
The information that resolves this ambiguity is \emph{generated by the agent itself}: SWE agents follow ReAct-style loops~\cite{yao2023react} of thoughts, actions, and observations, and a substantial fraction of their turns are spent on bug localization rather than patch synthesis~\cite{xia2024agentless, zhang2024autocoderover}. The intermediate observations --- failed tests, retrieved file contents, stack traces --- supply structural signal that no prompt-only router can access.\looseness=-1

\textbf{SWE-Router}, our value-based \emph{temporal} routing framework, operationalises this insight (\Cref{fig:main-figure}): a cheap weak model $m_1$ runs for a few exploratory turns, then a learned value head reads the resulting partial trajectory and predicts whether $m_1$ will eventually solve the task; if this prediction exceeds a cost-adjusted threshold, $m_1$ continues, otherwise we escalate to a stronger $m_2$. The value head is supervised by binary trajectory rewards and is closely related to execution-free reward models~\cite{shum2026swerm}, but operates on \emph{incomplete} trajectories and feeds a routing rule. 
We also provide a Bayes-optimality result (\Cref{thm:more-information}) showing that partial-trajectory conditioning never harms routing and is strictly better when exploration is informative. 

\begin{figure*}[t]
    \centering    
    \includegraphics[width=\textwidth]{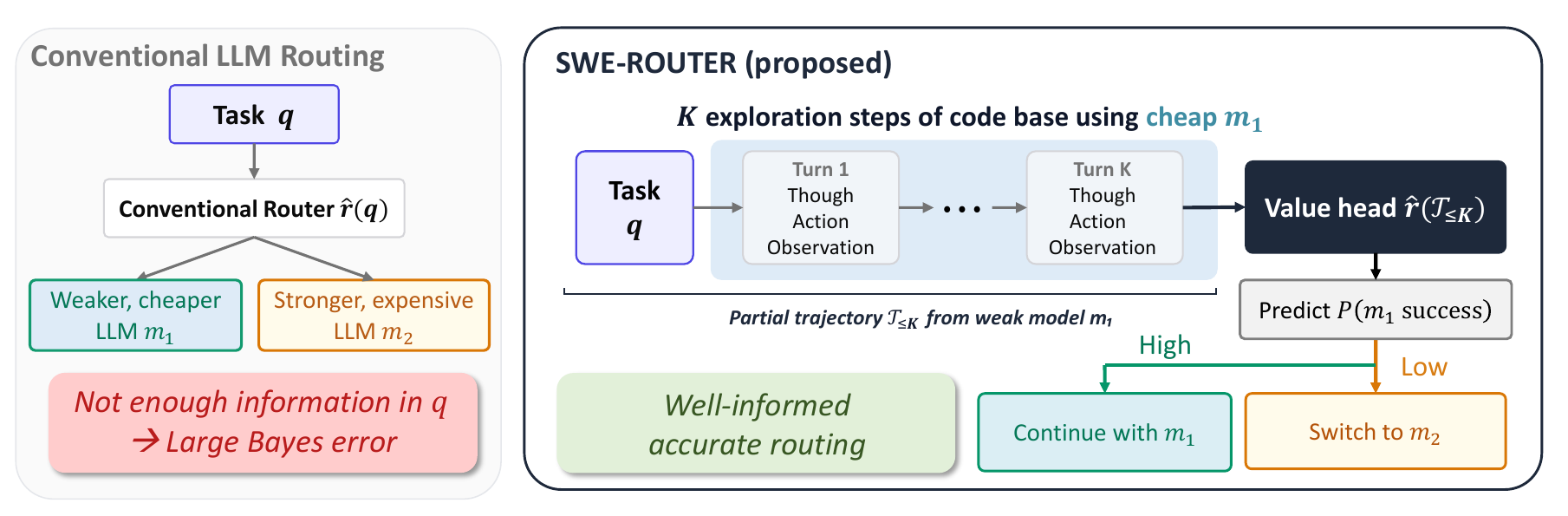}
    \vspace{-10pt}
    \caption{\textbf{SWE-Router overview.} A weak (cheap) model $m_1$ runs for several exploratory turns; a learned value head reads the resulting partial trajectory and predicts whether $m_1$ will eventually solve the task. If the predicted probability exceeds a cost-adjusted threshold, $m_1$ continues; otherwise the strong (expensive) model $m_2$ is invoked from the original prompt $q$. Conditioning on the partial trajectory --- rather than on $q$ alone --- breaks the information-theoretic ceiling that prompt-only routers face (\Cref{thm:more-information}).}
    \label{fig:main-figure}
\end{figure*}

\paragraph{Contributions.} (i) We identify a Bayes-error floor that any prompt-conditioned router inherits in multi-turn agentic SWE, and introduce \textbf{temporal routing} on the partial trajectory; (ii) we instantiate this in \textbf{SWE-Router}, the first routing framework conditioning on the agent's own intermediate observations;
(iii) on SWEBench-Verified, SWE-Router achieves substantial cost reductions at matched resolution relative to strong prompt-only baselines.\looseness=-1

\section{Preliminary}

\subsection{Multi-turn Trajectories in Agentic Tasks}

In \emph{agentic} systems, the LLM autonomously interacts with an environment before producing a final outcome, such as the ReAct framework~\citep{yao2023react} producing a thought $z_t$, an action $a_t$, and an observation $o_t$ each turn~\citep{NEURIPS2024_5a7c9475, yang2024sweagent}.
Given a problem description $q$, a complete SWE-agent trajectory $\mathcal{T}$ takes the form\looseness=-1
\begin{align}
    \mathcal{T}(m)=[q,\quad (z_1, a_1, o_1), \quad \ldots,\quad (z_T,a_T, o_T)],
\end{align}
where $m$ denotes the LLM that generated the trajectory. The final action $a_T$ is \texttt{submit}, whose observation $o_T$ is a git patch. The patch is checked against the repository's unit tests and assigned a verifiable binary reward $r\in \{0,1\}$.\looseness=-1

\subsection{LLM Routing}

An LLM router improves the cost--performance Pareto frontier by invoking a minimum-cost model with sufficient capability. In the binary setting it picks between a \emph{weak} model $m_1$ and a \emph{strong} model $m_2$ given $q$ (``weak/strong'' refers to cost--capability rather than parameter count); model-indexed quantities are subscripted $1, 2$. The router invokes $m_1$ when its cost-adjusted expected success rate exceeds $m_2$'s:\looseness=-1
\begin{align}
    \text{Invoke } \begin{cases}
        m_1 \text{ if } \hat{r}_{1}(q) - \lambda c_1 > \hat{r}_{2}(q) - \lambda c_2 \\
        m_2 \text{ otherwise, } 
    \end{cases}
     \label{eq:router}
\end{align}
where $\hat r_i(q)\in[0,1]$ is the estimated probability $m_i$ resolves $q$, $c_i>0$ is its expected inference cost (USD; on long agentic trajectories cost is model-dominated, so we treat $c_i$ as $q$-independent). Costs are converted into success-probability units by  $\lambda>0$; tuning $\lambda$ traces the cost--quality Pareto curve \citep{ong2025routellm,jitkrittum2025universal}.
\looseness=-1

In agentic settings $q$ often does not pin down task difficulty: early turns reveal feedback (file structure, error patterns, search depth) that the issue text lacks. 
The decision in \Cref{eq:router} therefore frequently inherits large Bayes error, motivating a router that can read the partial trajectory.\looseness=-1

\section{Value-Based Temporal Routing}

We run $m_{1}$ for $K$ turns to collect $\mathcal{T}_{\leq K,1} = [q,(z_1, a_1, o_1), \ldots, (z_K,a_K, o_K)]$ and train a value function $\hat r_1(\mathcal{T}_{\leq K,1})$ as a pretrained LLM with a small classification head, supervised by the binary reward $r$ via cross-entropy. On escalation, $m_2$ \emph{restarts} from $q$: continuation rules need expensive online $m_2$ inference, and conditioning $m_2$ on $m_1$'s reasoning has been seen to bias $m_2$ toward $m_1$'s mistakes. The $K$ weak-model turns are paid up-front and counted in any escalated run's cost. With per-turn weak cost $c<c_1$, the rule is
\begin{align}
    \begin{cases}
        \text{Continue with $m_1$} & \text{if}\quad  y_1 \geq y_2  \\
        \text{Switch to $m_2$ from the $K$-th step} & \text{otherwise},
    \end{cases} \label{eq:value-router}
\end{align}
where $y_1=\hat{r}_1(\mathcal{T}_{\leq K,1}) -\lambda (c_1 - K c)$ and $y_2=\hat{r}_2(q) - \lambda c_2$.
SWE agents typically spend their early turns locating files and isolating failures --- exploration cheaper than patch synthesis, so delegating it to $m_1$ is a useful division of labor. Two simplifications turn \Cref{eq:value-router} into a one-hyperparameter rule: \emph{(i) Threshold reduction:} absorb $\lambda(c_1{-}c_2{-}Kc)$ into a single scalar $\lambda'$ tuned on validation, so ``continue iff $\hat r_1-\hat r_2(q)\ge \lambda'$''; \emph{(ii) One-sided routing:} treat $\hat r_2$ as an unknown trajectory-independent constant, giving ``continue iff $\hat r_1\ge \lambda''$''. We use one-sided routing for all reported experiments. 
\Cref{alg:swe-router} summarizes the deployment-time decision procedure.\looseness=-1

\section{Theoretical Results}

Viewing routing as a one-step Bayesian decision problem, the Bayes-optimal expected utility is monotone in the conditioning $\sigma$-algebra. This is the classical ``information never hurts'' principle for Bayes decision rules~\citep{degroot1962uncertainty,blackwell1953equivalent}, instantiated in our trajectory-conditioning setting. Let $Q$ be the random task description and $S_t \coloneqq [Q, (Z_1, A_1, O_1), \dots, (Z_t, A_t, O_t)]$ the partial trajectory after $t$ weak-model turns; by construction $Q$ is a function of $S_t$, so $S_t$ carries strictly more information than $Q$. The improvement is strict whenever the conditional gap $\E[U_1-U_2\mid S_t]$ is not $\sigma(Q)$-measurable.\looseness=-1

\begin{theorem}[More informative signals improve Bayes-optimal routing]
\label{thm:more-information}
Let $U_1, U_2$ be integrable latent cost-adjusted utilities of routing to $m_1, m_2$, and define $V(Z)\coloneqq\E[\max\{\E[U_1\mid Z], \E[U_2\mid Z]\}]$ for any signal $Z$ (the Bayes-optimal expected utility under $Z$). If $Q$ is determined by $S_t$, then $V(S_t)\ge V(Q)$.
\end{theorem}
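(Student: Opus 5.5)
The plan is the standard ``information never hurts'' argument, run through the tower property of conditional expectation. Since $Q$ is determined by $S_t$ (that is, $Q=g(S_t)$ for a measurable $g$), we have $\sigma(Q)\subseteq\sigma(S_t)$. Write $\mathcal{G}\coloneqq\sigma(Q)$ and $\mathcal{F}\coloneqq\sigma(S_t)$. For $i=1,2$ put $X_i\coloneqq\E[U_i\mid \mathcal{F}]$ and $Y_i\coloneqq\E[U_i\mid\mathcal{G}]$. Each $U_i$ is integrable, so these conditional expectations exist and are integrable. Hence $\max\{X_1,X_2\}$ and $\max\{Y_1,Y_2\}$ are integrable as well, being bounded by $|X_1|+|X_2|$, and both $V(S_t)$ and $V(Q)$ are finite.

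The first step is the tower property: $\E[X_i\mid\mathcal{G}]=Y_i$ for each $i$, because $\mathcal{G}\subseteq\mathcal{F}$. The second step is a pointwise inequality, $\max\{X_1,X_2\}\ge X_i$ for each $i$. Taking conditional expectation given $\mathcal{G}$ and using monotonicity gives $\E[\max\{X_1,X_2\}\mid\mathcal{G}]\ge Y_i$ almost surely, for $i=1,2$. Since this holds for both indices, it holds for their maximum:
\[
\E[\max\{X_1,X_2\}\mid\mathcal{G}]\ge \max\{Y_1,Y_2\}\quad\text{a.s.}
\]
This is simply conditional Jensen applied to the convex map $(x_1,x_2)\mapsto\max\{x_1,x_2\}$.

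The third step takes unconditional expectations of that display and applies the tower property once more. The left side becomes $V(S_t)$ and the right side becomes $V(Q)$, which gives the claim.

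No step here is a real obstacle. The one point needing care is the measure-theoretic bookkeeping: the inequalities hold almost surely, and ``$Q$ is determined by $S_t$'' has to be read as $Q$ being $\sigma(S_t)$-measurable, so that the tower property applies. Integrability has already been handled above, so every expectation is well defined and no $\infty-\infty$ issue arises.

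For the strictness remark made before the theorem, I would use the identity $\max\{x_1,x_2\}=\tfrac12(x_1+x_2)+\tfrac12|x_1-x_2|$. This gives
\[
V(S_t)-V(Q)=\tfrac12\,\E\bigl[\,|D|-|\E[D\mid\mathcal{G}]|\,\bigr],\qquad D\coloneqq X_1-X_2 .
\]
Strictness then reduces to strict conditional Jensen for $|\cdot|$. That holds exactly when $D$ takes both signs with positive conditional probability on a $\mathcal{G}$-event of positive measure. In words: the partial trajectory must flip the routing decision sometimes, not merely fail to be $\sigma(Q)$-measurable. This subtlety is the only delicate point, and it does not affect the theorem as stated.
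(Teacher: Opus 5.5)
Your proof is correct, but for the main inequality you take a slightly different route from the paper. The paper first uses $\max\{a,b\}=\tfrac12(a+b+|a-b|)$ to write $V(Z)=\tfrac12\E[U_1+U_2]+\tfrac12\E\bigl[|\E[U_1-U_2\mid Z]|\bigr]$, so that only the second term depends on $Z$. It then applies the tower property and scalar conditional Jensen to $|\cdot|$. You instead bound $\E[\max\{X_1,X_2\}\mid\mathcal{G}]$ below by each $Y_i$, using only monotonicity of conditional expectation and the tower property; this is the elementary proof of conditional Jensen for the max.

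Your version needs no identity and carries over unchanged to routing among any finite number of models, where the $|a-b|$ trick is unavailable. The paper's decomposition buys an explicit formula for the gap $V(S_t)-V(Q)$, which you end up using for strictness anyway.

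On strictness, your remark is right and the paper's claim is not. The paper asserts, in the main text and in the appendix, that equality holds iff $\E[U_1-U_2\mid S_t]$ is $\sigma(Q)$-measurable, but only the ``if'' direction is true. For a counterexample, take $Q$ constant and let $\E[U_1-U_2\mid S_t]$ take the values $1$ and $2$ with probability $\tfrac12$ each. This gap is not $\sigma(Q)$-measurable, yet $V(S_t)=V(Q)$, because the optimal decision never changes. The correct condition is yours: equality holds iff $\min\{\E[D^+\mid\mathcal{G}],\E[D^-\mid\mathcal{G}]\}=0$ almost surely, where $D^{+}$ and $D^{-}$ are the positive and negative parts of $D$.
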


\paragraph{Proof sketch.}
The identity $\max\{a,b\} = \tfrac{1}{2}(a+b+|a-b|)$ gives $V(Z) = \tfrac{1}{2}\E[U_1+U_2] + \tfrac{1}{2}\E[|\E[U_1{-}U_2\mid Z]|]$. 
Since $Q$ is a function of $S_t$, the tower rule yields $\E[U_1{-}U_2\mid Q] = \E[\E[U_1{-}U_2\mid S_t]\mid Q]$, and the conditional Jensen inequality applied to $|{\cdot}|$ gives $|\E[\E[U_1{-}U_2\mid S_t]\mid Q]| \le \E[|\E[U_1{-}U_2\mid S_t]|\mid Q]$. Taking expectations yields $V(S_t)\ge V(Q)$. The full proof is in \Cref{app:proof-1}.

\paragraph{Practical reading.} The decomposition above clarifies why temporal routing helps. Prompt-only routers operate at the resolution of $\E[U_1{-}U_2\mid Q]$, which averages out fine-grained difficulty cues only observable after a few exploratory turns. 
The partial trajectory $S_t$ refines this gap: instances on which the two models would make similar predictions get routed cheaply, while hard instances are escalated. The improvement is strict whenever exploration is informative for the cost--benefit comparison; 
it vanishes only when the prompt already determines the optimal choice.\looseness=-1

\begin{algorithm}[t]
\caption{SWE-Router: one-sided routing (deployment).}
\label{alg:swe-router}
\begin{algorithmic}[1]
\REQUIRE problem $q$; weak model $m_1$, strong model $m_2$; value head $\hat r_1$; exploration budget $K$; threshold $\lambda''$ (tuned on validation).
\STATE Initialise $\mathcal{T}\leftarrow [q]$.
\FOR{$t = 1$ to $K$}
    \STATE Run $m_1$ for one step: append $(z_t, a_t, o_t)$ to $\mathcal{T}$.
    \IF{$a_t = \texttt{submit}$} 
    \STATE \textbf{Return} $m_1$'s patch (early termination). \ENDIF
\ENDFOR
\STATE Compute $\hat r_1(\mathcal{T})$ from the value head's last-token logits.
\IF{$\hat r_1(\mathcal{T}) \ge \lambda''$}
    \STATE \textbf{Continue with $m_1$} from turn $K{+}1$ until $a_T=\texttt{submit}$; \textbf{Return} its patch.
\ELSE
    \STATE \textbf{Restart with $m_2$} from $q$; \textbf{Return} $m_2$'s patch.
\ENDIF
\end{algorithmic}
\end{algorithm}

\section{Experiments}
\label{section: experiments}

In this section, we explain how the experiments are implemented and evaluated. We provide further details of experiment in \Cref{appendix: additional experimental settings} and \Cref{app:additional-results}.

\subsection{Evaluation Metrics}
\label{section: evaluation metrics}

\textbf{Threshold-free metrics.} \textbf{AUROC} is the standard ROC-AUC of $\hat r_1(\mathcal{T}_{\le K,1})$ as a binary classifier of $r$ and measures only the value head's discriminative quality. \textbf{Route-AUC} is the normalised area under the cost--vs--resolved-rate curve obtained by sweeping the routing threshold; both axes are normalised so the all-weak point is $(0,0)$ and the all-strong point is $(1,1)$. Hence $0$ corresponds to cost-proportional interpolation between the two LLMs and $1$ matches the strong model's resolved rate at the weak model's cost.\looseness=-1

\subsection{Experiment Setting}
\label{section: experiment setting}

\textbf{Datasets.} We use both SWE-Smith and SWE-Bench for training and test split. We only use SWE-Smith for the validation split. See~\Cref{appendix: additional experimental settings} for further details.

\textbf{LLM agents and pairs.} We generate trajectories from four LLMs in mini-SWE-agent~\citep{yang2024sweagent} and form two (weak, strong) pairs: weak $\in$ \{\texttt{gpt-5-mini}, \texttt{deepseek-v3.2}\}, and \texttt{gemini-3-pro-preview} for the strong model. The weak model has both lower per-instance cost and lower resolved rate in every pair (\Cref{tab:data collection}).

\textbf{Value head and training.} For the value function $\hat{r}_1$, we use \texttt{Qwen2.5-Coder-7B-Instruct} with a 2-class classification head fine-tuned via LoRA~\citep{hu2021loralowrankadaptationlarge}.\looseness=-1

\textbf{Baselines.} Three prompt-only routers in the \Cref{eq:router} family --- logistic regression, $k$-NN, and XGBoost on OpenAI \texttt{text-embedding-3-large} embeddings of $q$ --- and our \textbf{non-temporal router} ($K{=}0$ instance of our framework, conditioning only on $q$) are used as baselines.

\subsection{Results}
\label{section: experiment results}

We evaluate routing on two (weak, strong) LLM pairs on the test splits of SWE-Smith and SWE-Bench datasets. We visualize the curve of routing results in \Cref{fig:op-pareto} and compare the efficiency of SWE-Router against the baselines. The gray bands in the plots indicate the \emph{random-assignment} reference, computed over $400$ Monte-Carlo permutations of the per-instance routing decision. We also use $\times$ and $\bigstar$ to mark the all-weak and all-strong endpoints.\looseness=-1

\begin{figure*}[t]
    \centering
    \includegraphics[width=\textwidth]{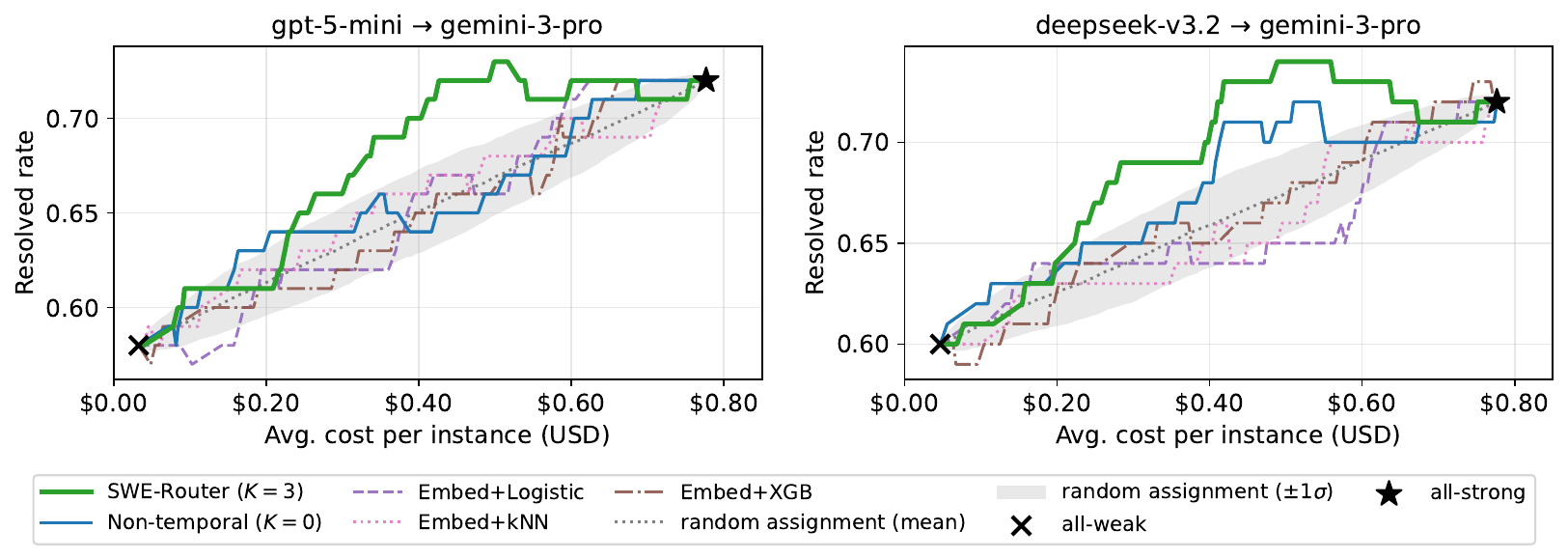}
    \caption{Cost-vs.-resolved \emph{routing curves} on the held-out SWE-Bench Verified for the two (weak, strong) pairs. Each colored curve is the threshold sweep of the corresponding router; the gray band is the \emph{random-assignment} reference, computed as the per-cost-level mean ($\pm1\sigma$) over $400$ Monte-Carlo permutations of the per-instance routing decision; $\times$ and $\bigstar$ mark the all-weak and all-strong endpoints. The curves of SWE-Router ($K{=}3$, green) are above the gray bands, and even exceed the strong-model marker for some thresholds, indicating the synergistic effect of using the weak model for resolving tasks that the strong model could not.}
    \label{fig:op-pareto}
\end{figure*}

\paragraph{Does conditioning on the partial trajectory improve over prompt-only baselines?} The cost--resolved routing curves in \Cref{fig:op-pareto} gives a clear summary of the advantage of using partial trajectory for temporal routing. SWE-Router (green) sits above the random-assignment band in both pairs and traces a Pareto frontier that strictly dominates the embedding-baseline curves over most of the cost range. 
The picture also holds qualitatively for \texttt{deepseek-v3.2} $\to$ \texttt{gemini-3-pro-preview}, where the green curve again passes above the random band at intermediate cost.
We also present the computed ROC-AUC and Route-AUC values of each method in \Cref{tab:full-results-auc-workshop}. In the test split of SWE-Bench Verified, both pairs show significant improvement in Route-AUC over the baselines. When \texttt{deepseek-v3.2} is used as $m_1$, SWE-Router shows Route-AUC of $0.780$, which is a $+15.3$ pp improvement over the non-temporal routing baseline ($K=0$). While the router trained with \texttt{gpt-5-mini}'s trajectories for $m_1$ also exhibits $+12$ pp improvement over the strongest baseline. AUROC evaluations in the test split of SWE-Bench Verified show that the accuracy of value functions are not strongly correlated to actual routing performances, as can be shown from the AUROC of non-temporal router in the \texttt{deepseek-v3.2} $\to$ \texttt{gemini-3-pro-preview} pair.
We further note that on both pairs the SWE-Router curve briefly passes \emph{above} the all-strong $\bigstar$. This synergy is feasible because weak and strong solve non-identical instance subsets, so correctly routing the easy instances to $m_1$ resolves cases on which always-strong fails. The overall shape of the routing curves, also measured through AUC metrics in \Cref{tab:full-results-auc-workshop}, is evidence that the value-head ranking captures a useful complementarity signal for efficient temporal routing.\looseness=-1

\section{Related Work}

Building on agentic frameworks like ReAct~\citep{yao2023react}, recent work equips LLMs to operate on real repositories. SWE-agent~\citep{NEURIPS2024_5a7c9475,yang2024sweagent} introduces an agent--computer interface for browsing, editing, and executing code.
Code benchmarks have moved from self-contained problems~\citep{chen2021evaluating,austin2021program,hendrycks2021measuring} to repository-level evaluation. SWE-bench~\citep{jimenez2024swebench} pioneered repository-scale issue resolution against hidden tests with a curated subset~\citep{swebenchverified}.
FrugalGPT~\citep{chen2024frugalgpt} introduced cost-aware cascades, and RouteLLM~\citep{ong2025routellm} formalized binary routing from preference data for Pareto cost--quality gains. All the existing methods condition on the prompt alone; SWE-Router differs by conditioning on the partial agent trajectory. We provide more related works in \Cref{appendix: Additional Related Work}.\looseness=-1

\section{Conclusion}

We introduced \textbf{SWE-Router}, a value-based routing framework that conditions on a short partial trajectory of a cheap weak model rather than on the task description alone. A simple Bayes-optimality result (\Cref{thm:more-information}) shows that this conditioning never harms routing and is strictly better when exploration is informative. We evaluated our approach using two pairs of popular LLMs, and showed that SWE-Router improves Route-AUC by at least $+12$ pp on the test split of SWE-Bench Verified over the strongest baseline, also presenting the routing curve of methods where SWE-Router achieves the greatest overall performance.\looseness=-1

\clearpage

\section*{Impact Statement}
Our approach provides a clear positive impact of reduced frontier-model inference cost and energy use with minimal tradeoff in the overall performance. However, our proposed approach of utilizing partial trajectories can be misused to bypass safety-focused frontier models for malicious purposes. Considering multiple objectives for routing, such as considering the safety of the given task together during routing, is an important and interesting direction for future work.

\bibliography{refs_icml2026workshop}

@inproceedings{
shum2026swerm,
title={{SWE}-{RM}: Execution-free Feedback for Software Engineering Agents},
author={KaShun SHUM and Binyuan Hui and Jiawei Chen and Lei Zhang and X. W. and Jiaxi Yang and Yuzhen Huang and Junyang Lin and Junxian He},
booktitle={The Fourteenth International Conference on Learning Representations},
year={2026},
url={https://openreview.net/forum?id=H9wMe1G76j}
}

@article{jitkrittum2025universal,
  title={Universal model routing for efficient llm inference},
  author={Jitkrittum, Wittawat and Narasimhan, Harikrishna and Rawat, Ankit Singh and Juneja, Jeevesh and Wang, Congchao and Wang, Zifeng and Go, Alec and Lee, Chen-Yu and Shenoy, Pradeep and Panigrahy, Rina and others},
  journal={arXiv preprint arXiv:2502.08773},
  year={2025}
}

@inproceedings{
zhuang2025embedllm,
title={Embed{LLM}: Learning Compact Representations of Large Language Models},
author={Richard Zhuang and Tianhao Wu and Zhaojin Wen and Andrew Li and Jiantao Jiao and Kannan Ramchandran},
booktitle={The Thirteenth International Conference on Learning Representations},
year={2025},
url={https://openreview.net/forum?id=Fs9EabmQrJ}
}

@inproceedings{
ong2025routellm,
title={Route{LLM}: Learning to Route {LLM}s from Preference Data},
author={Isaac Ong and Amjad Almahairi and Vincent Wu and Wei-Lin Chiang and Tianhao Wu and Joseph E. Gonzalez and M Waleed Kadous and Ion Stoica},
booktitle={The Thirteenth International Conference on Learning Representations},
year={2025},
url={https://openreview.net/forum?id=8sSqNntaMr}
}

@inproceedings{
yao2023react,
title={ReAct: Synergizing Reasoning and Acting in Language Models},
author={Shunyu Yao and Jeffrey Zhao and Dian Yu and Nan Du and Izhak Shafran and Karthik R Narasimhan and Yuan Cao},
booktitle={The Eleventh International Conference on Learning Representations },
year={2023},
url={https://openreview.net/forum?id=WE_vluYUL-X}
}

@inproceedings{
yang2025swesmith,
title={{SWE}-smith: Scaling Data for Software Engineering Agents},
author={John Yang and Kilian Lieret and Carlos E Jimenez and Alexander Wettig and Kabir Khandpur and Yanzhe Zhang and Binyuan Hui and Ofir Press and Ludwig Schmidt and Diyi Yang},
booktitle={The Thirty-ninth Annual Conference on Neural Information Processing Systems Datasets and Benchmarks Track},
year={2025},
url={https://openreview.net/forum?id=63iVrXc8cC}
}

@article{deng2025swe,
  title={Swe-bench pro: Can ai agents solve long-horizon software engineering tasks?},
  author={Deng, Xiang and Da, Jeff and Pan, Edwin and He, Yannis Yiming and Ide, Charles and Garg, Kanak and Lauffer, Niklas and Park, Andrew and Pasari, Nitin and Rane, Chetan and others},
  journal={arXiv preprint arXiv:2509.16941},
  year={2025}
}

@inproceedings{
jimenez2024swebench,
title={{SWE}-bench: Can Language Models Resolve Real-world Github Issues?},
author={Jimenez, Carlos E and Yang, John and Wettig, Alexander and Yao, Shunyu and Pei, Kexin and Press, Ofir and Narasimhan, Karthik},
booktitle={The Twelfth International Conference on Learning Representations},
year={2024},
url={https://openreview.net/forum?id=VTF8yNQM66}
}

@inproceedings{NEURIPS2024_5a7c9475,
 author = {Yang, John and Jimenez, Carlos E. and Wettig, Alexander and Lieret, Kilian and Yao, Shunyu and Narasimhan, Karthik and Press, Ofir},
 booktitle = {Advances in Neural Information Processing Systems},
 doi = {10.52202/079017-1601},
 editor = {A. Globerson and L. Mackey and D. Belgrave and A. Fan and U. Paquet and J. Tomczak and C. Zhang},
 pages = {50528--50652},
 publisher = {Curran Associates, Inc.},
 title = {SWE-agent: Agent-Computer Interfaces Enable Automated Software Engineering},
 url = {https://proceedings.neurips.cc/paper_files/paper/2024/file/5a7c947568c1b1328ccc5230172e1e7c-Paper-Conference.pdf},
 volume = {37},
 year = {2024}
}

@article{chen2021evaluating,
  title={Evaluating Large Language Models Trained on Code},
  author={Chen, Mark and Tworek, Jerry and Jun, Heewoo and Yuan, Qiming and Ponde, Henrique and Kaplan, Jared and Edwards, Harrison and Burda, Yura and Joseph, Nicholas and Brockman, Greg and others},
  journal={arXiv preprint arXiv:2107.03374},
  year={2021}
}

@article{austin2021program,
  title={Program Synthesis with Large Language Models},
  author={Austin, Jacob and Odena, Augustus and Nye, Maxwell and Bosma, Maarten and Michalewski, Henryk and Dohan, David and Jiang, Ellen and Cai, Carrie and Terry, Michael and Le, Quoc and Sutton, Charles},
  journal={arXiv preprint arXiv:2108.07732},
  year={2021}
}

@inproceedings{hendrycks2021measuring,
  title={Measuring Coding Challenge Competence With APPS},
  author={Hendrycks, Dan and Basart, Steven and Kadavath, Saurav and Mazeika, Mantas and Arora, Akul and Guo, Ethan and Burns, Collin and Puranik, Samir and He, Horace and Song, Dawn and Steinhardt, Jacob},
  booktitle={Neural Information Processing Systems},
  year={2021}
}

@misc{swebenchverified, url={https://openai.com/index/introducing-swe-bench-verified/}, journal={Introducing swe-bench verified | openai},author={OpenAI}, year={2024}}

@article{zan2024multiswe,
  title={Multi-SWE-bench: A Multilingual Benchmark for Issue Resolving},
  author={Zan, Daoguang and Huang, Zhirong and Liu, Wei and Chen, Hanwu and Zhang, Linhao and Xin, Shulin and Chen, Lu and Liu, Qi and Zhong, Xiaojian and Li, Aoyan and others},
  journal={arXiv preprint arXiv:2404.02605},
  year={2024}
}

@inproceedings{ding2023crosscodeeval,
  title={CrossCodeEval: A Diverse and Multilingual Benchmark for Cross-File Code Completion},
  author={Ding, Yangruibo and Wang, Zijian and Ahmad, Wasi Uddin and Ding, Hantian and Tan, Ming and Jain, Nihal and Ramanathan, Murali Krishna and Nallapati, Ramesh and Bhatia, Parminder and Roth, Dan and Xiang, Bing},
  booktitle={Neural Information Processing Systems},
  year={2023}
}

@article{liu2023repobench,
  title={RepoBench: Benchmarking Repository-Level Code Auto-Completion Systems},
  author={Liu, Tianyang and Xu, Canwen and McAuley, Julian},
  journal={arXiv preprint arXiv:2306.03091},
  year={2023}
}

@article{zhuo2024bigcodebench,
  title={BigCodeBench: Benchmarking Code Generation with Diverse Function Calls and Complex Instructions},
  author={Zhuo, Terry Yue and Vu, Minh Chien and Chim, Jenny and Hu, Han and Yu, Wenhao and Widyasari, Ratnadira and Yusuf, Imam Nur Bani and Zhan, Haolan and He, Junda and Paul, Indraneil and others},
  journal={arXiv preprint arXiv:2406.15877},
  year={2024}
}

@article{yang2024swebenchmm,
  title={SWE-bench Multimodal: Do AI Systems Generalize to Visual Software Domains?},
  author={Yang, John and Jimenez, Carlos E and Wettig, Alexander and Narasimhan, Karthik and Press, Ofir},
  journal={arXiv preprint arXiv:2410.03859},
  year={2024}
}

@inproceedings{yang2024sweagent,
  title={SWE-agent: Agent-Computer Interfaces Enable Automated Software Engineering},
  author={Yang, John and Jimenez, Carlos E and Wettig, Alexander and Lieret, Kilian and Yao, Shunyu and Narasimhan, Karthik and Press, Ofir},
  booktitle={Neural Information Processing Systems},
  year={2024}
}

@inproceedings{zhang2024autocoderover,
  title={AutoCodeRover: Autonomous Program Improvement},
  author={Zhang, Yuntong and Ruan, Haifeng and Fan, Zhiyu and Roychoudhury, Abhik},
  booktitle={ACM SIGSOFT International Symposium on Software Testing and Analysis},
  year={2024}
}

@article{wang2024openhands,
  title={OpenHands: An Open Platform for AI Software Developers as Generalist Agents},
  author={Wang, Xingyao and Li, Boxuan and Song, Yufan and Xu, Frank F and Tang, Xiangru and Zhuge, Mingchen and Pan, Jiayi and Song, Yueqi and Li, Bowen and Singh, Jaskirat and others},
  journal={arXiv preprint arXiv:2407.16741},
  year={2024}
}

@article{huang2023agentcoder,
  title={AgentCoder: Multi-Agent-based Code Generation with Iterative Testing and Optimisation},
  author={Huang, Dong and Bu, Qingwen and Zhang, Jie M and Luck, Michael and Cui, Heming},
  journal={arXiv preprint arXiv:2312.13010},
  year={2023}
}

@inproceedings{wang2024executable,
  title={Executable Code Actions Elicit Better LLM Agents},
  author={Wang, Xingyao and Chen, Yangyi and Yuan, Lifan and Zhang, Yizhe and Li, Yunzhu and Peng, Hao and Ji, Heng},
  booktitle={International Conference on Machine Learning},
  year={2024}
}

@article{xia2024agentless,
  title={Agentless: Demystifying LLM-based Software Engineering Agents},
  author={Xia, Chunqiu Steven and Deng, Yinlin and Dunn, Soren and Zhang, Lingming},
  journal={arXiv preprint arXiv:2407.01489},
  year={2024}
}

@article{bairi2024codeplan,
  title={CodePlan: Repository-level Coding using LLMs and Planning},
  author={Bairi, Ramakrishna and Sonwane, Atharv and Kanade, Aditya and Vageesh, DC and Iyer, Arun and Parthasarathy, Suresh and Rajamani, Sriram and Ashok, B and Shet, Shashank},
  journal={Proceedings of the ACM on Software Engineering},
  year={2024}
}

@article{xu2024benchmark,
  title={Benchmark Data Contamination of Large Language Models: A Survey},
  author={Xu, Cheng and Guan, Jiuhai and Zhao, Xu and Fu, Chenyi and Xin, Qiushi and Wang, Zihan and Li, Libo and Fu, Jin and Wang, Hao and Liu, Jun},
  journal={arXiv preprint arXiv:2406.04244},
  year={2024}
}

@inproceedings{deng2024investigating,
  title={Investigating Data Contamination in Modern Benchmarks for Large Language Models},
  author={Deng, Chunyuan and Zhao, Yilun and Tang, Xiangru and Gerstein, Mark and Cohan, Arman},
  booktitle={Proceedings of the 2024 Conference of the North American Chapter of the Association for Computational Linguistics: Human Language Technologies (Volume 1: Long Papers)},
  pages={8706--8719},
  year={2024},
  address={Mexico City, Mexico},
  publisher={Association for Computational Linguistics}
}

@article{aleithan2024swe,
  title={SWE-Bench+: Enhanced Coding Benchmark for LLMs},
  author={Aleithan, Reem and others},
  journal={arXiv preprint arXiv:2410.06992},
  year={2024}
}

@article{zhang2025swebench,
  title={SWE-bench Goes Live!},
  author={Zhang, Chaoyun and others},
  journal={arXiv preprint arXiv:2505.23419},
  year={2025}
}

@article{White2024LiveBenchAC,
  title={LiveBench: A Challenging, Contamination-Free LLM Benchmark},
  author={Colin White and Samuel Dooley and ManleyRoberts and Arka Pal and Ben Feuer and Siddhartha Jain and Ravid Shwartz-Ziv and Neel Jain and Khalid Saifullah and Siddartha Naidu and Chinmay Hegde and Yann LeCun and Tom Goldstein and Willie Neiswanger and Micah Goldblum and Abacus.AI and Nyu and Nvidia},
  journal={ArXiv},
  year={2024},
  volume={abs/2406.19314},
  url={https://api.semanticscholar.org/CorpusID:270556394}
}

@article{Da2025AgentRLVRTS,
  title={Agent-RLVR: Training Software Engineering Agents via Guidance and Environment Rewards},
  author={Jeff Da and Clinton J. Wang and Xiang Deng and Yuntao Ma and Nikhil Barhate and Sean M. Hendryx},
  journal={ArXiv},
  year={2025},
  volume={abs/2506.11425},
  url={https://api.semanticscholar.org/CorpusID:279391657}
}

@article{He2025SWEPerfCL,
  title={SWE-Perf: Can Language Models Optimize Code Performance on Real-World Repositories?},
  author={Xinyi He and Qian Liu and Mingzhe Du and Lin Yan and Zhijie Fan and Yiming Huang and Zejian Yuan and Zejun Ma},
  journal={ArXiv},
  year={2025},
  volume={abs/2507.12415},
  url={https://api.semanticscholar.org/CorpusID:280297994}
}

@article{Zhang2024ACE,
  title={A Careful Examination of Large Language Model Performance on Grade School Arithmetic},
  author={Hugh Zhang and Jeff Da and Dean Lee and Vaughn Robinson and Catherine Wu and Will Song and Tiffany Zhao and Pranav Raja and Dylan Slack and Qin Lyu and Sean M. Hendryx and Russell Kaplan and Michele Lunati and Summer Yue},
  journal={ArXiv},
  year={2024},
  volume={abs/2405.00332},
  url={https://api.semanticscholar.org/CorpusID:269484687}
}

@article{feng2024graphrouter,
  title={Graphrouter: A graph-based router for llm selections},
  author={Feng, Tao and Shen, Yanzhen and You, Jiaxuan},
  journal={arXiv preprint arXiv:2410.03834},
  year={2024}
}

@article{ma2025reasoning,
  title={Reasoning models can be effective without thinking},
  author={Ma, Wenjie and He, Jingxuan and Snell, Charlie and Griggs, Tyler and Min, Sewon and Zaharia, Matei},
  journal={arXiv preprint arXiv:2504.09858},
  year={2025}
}

@article{huang2025routereval,
  title={Routereval: A comprehensive benchmark for routing llms to explore model-level scaling up in llms},
  author={Huang, Zhongzhan and Ling, Guoming and Lin, Yupei and Chen, Yandong and Zhong, Shanshan and Wu, Hefeng and Lin, Liang},
  journal={arXiv preprint arXiv:2503.10657},
  year={2025}
}

@article{
    chen2024frugalgpt,
    title={Frugal{GPT}: How to Use Large Language Models While Reducing Cost and Improving Performance},
    author={Lingjiao Chen and Matei Zaharia and James Zou},
    journal={Transactions on Machine Learning Research},
    issn={2835-8856},
    year={2024},
    url={https://openreview.net/forum?id=cSimKw5p6R},
    note={}
}

@inproceedings{vsakota2024fly,
  title={Fly-swat or cannon? cost-effective language model choice via meta-modeling},
  author={{\v{S}}akota, Marija and Peyrard, Maxime and West, Robert},
  booktitle={Proceedings of the 17th ACM International Conference on Web Search and Data Mining},
  pages={606--615},
  year={2024}
}

@article{chen2024routerdc,
  title={Routerdc: Query-based router by dual contrastive learning for assembling large language models},
  author={Chen, Shuhao and Jiang, Weisen and Lin, Baijiong and Kwok, James and Zhang, Yu},
  journal={Advances in Neural Information Processing Systems},
  volume={37},
  pages={66305--66328},
  year={2024}
}

@article{hari2023tryage,
  title={Tryage: Real-time, intelligent routing of user prompts to large language models},
  author={Hari, Surya Narayanan and Thomson, Matt},
  journal={arXiv preprint arXiv:2308.11601},
  year={2023}
}

@article{shirkavand2025cost,
  title={Cost-Aware Contrastive Routing for LLMs},
  author={Shirkavand, Reza and Gao, Shangqian and Yu, Peiran and Huang, Heng},
  journal={arXiv preprint arXiv:2508.12491},
  year={2025}
}

@article{ding2024hybrid,
  title={Hybrid llm: Cost-efficient and quality-aware query routing},
  author={Ding, Dujian and Mallick, Ankur and Wang, Chi and Sim, Robert and Mukherjee, Subhabrata and Ruhle, Victor and Lakshmanan, Laks VS and Awadallah, Ahmed Hassan},
  journal={arXiv preprint arXiv:2404.14618},
  year={2024}
}

@article{aggarwal2023automix,
  title={Automix: Automatically mixing language models},
  author={Aggarwal, Pranjal and Madaan, Aman and Anand, Ankit and Potharaju, Srividya Pranavi and Mishra, Swaroop and Zhou, Pei and Gupta, Aditya and Rajagopal, Dheeraj and Kappaganthu, Karthik and Yang, Yiming and others},
  journal={arXiv preprint arXiv:2310.12963},
  year={2023}
}

@article{jiang2023llm,
  title={Llm-blender: Ensembling large language models with pairwise ranking and generative fusion},
  author={Jiang, Dongfu and Ren, Xiang and Lin, Bill Yuchen},
  journal={arXiv preprint arXiv:2306.02561},
  year={2023}
}

@inproceedings{
    yue2024large,
    title={Large Language Model Cascades with Mixture of Thought Representations for Cost-Efficient Reasoning},
    author={Murong Yue and Jie Zhao and Min Zhang and Liang Du and Ziyu Yao},
    booktitle={The Twelfth International Conference on Learning Representations},
    year={2024},
    url={https://openreview.net/forum?id=6okaSfANzh}
}

@article{somerstep2025carrot,
  title={Carrot: A cost aware rate optimal router},
  author={Somerstep, Seamus and Polo, Felipe Maia and de Oliveira, Allysson Flavio Melo and Mangal, Prattyush and Silva, M{\'\i}rian and Bhardwaj, Onkar and Yurochkin, Mikhail and Maity, Subha},
  journal={arXiv preprint arXiv:2502.03261},
  year={2025}
}

@article{feng2025ipr,
  title={IPR: Intelligent Prompt Routing with User-Controlled Quality-Cost Trade-offs},
  author={Feng, Aosong and Xu, Zhichao and Wu, Xian and Zhou, Kang and Guan, Sheng and Chen, Yueyan and Kulkarni, Ninad and Zhou, Yun and Srinivasan, Balasubramaniam and Ding, Haibo and others},
  journal={arXiv preprint arXiv:2509.06274},
  year={2025}
}

@article{mei2025omnirouter,
  title={OmniRouter: Budget and Performance Controllable Multi-LLM Routing},
  author={Mei, Kai and Xu, Wujiang and Lin, Shuhang and Zhang, Yongfeng},
  journal={arXiv preprint arXiv:2502.20576},
  year={2025}
}

@article{lu2025routerarena,
  title={RouterArena: An Open Platform for Comprehensive Comparison of LLM Routers},
  author={Lu, Yifan and Liu, Rixin and Yuan, Jiayi and Cui, Xingqi and Zhang, Shenrun and Liu, Hongyi and Xing, Jiarong},
  journal={arXiv preprint arXiv:2510.00202},
  year={2025}
}

@article{song2025irt,
  title={IRT-Router: Effective and Interpretable Multi-LLM Routing via Item Response Theory},
  author={Song, Wei and Huang, Zhenya and Cheng, Cheng and Gao, Weibo and Xu, Bihan and Zhao, GuanHao and Wang, Fei and Wu, Runze},
  journal={arXiv preprint arXiv:2506.01048},
  year={2025}
}

@article{pan2025route,
  title={Route to Reason: Adaptive Routing for LLM and Reasoning Strategy Selection},
  author={Pan, Zhihong and Zhang, Kai and Zhao, Yuze and Han, Yupeng},
  journal={arXiv preprint arXiv:2505.19435},
  year={2025}
}

@article{shao2025route,
  title={Route-and-Reason: Scaling Large Language Model Reasoning with Reinforced Model Router},
  author={Shao, Chenyang and Liu, Xinyang and Lin, Yutang and Xu, Fengli and Li, Yong},
  journal={arXiv preprint arXiv:2506.05901},
  year={2025}
}

@misc{Guo2026ThinkWN,
      title={Think When Needed: Model-Aware Reasoning Routing for LLM-based Ranking}, 
      author={Huizhong Guo and Tianjun Wei and Dongxia Wang and Yingpeng Du and Ziyan Wang and Jie Zhang and Zhu Sun},
      year={2026},
      eprint={2601.18146},
      archivePrefix={arXiv},
      primaryClass={cs.IR},
      url={https://arxiv.org/abs/2601.18146}, 
}

@article{wang2025reason,
  title={When to Reason: Semantic Router for vLLM},
  author={Wang, Chen and Liu, Xunzhuo and Liu, Yuhan and Zhu, Yue and Mo, Xiangxi and Jiang, Junchen and Chen, Huamin},
  journal={arXiv preprint arXiv:2510.08731},
  year={2025}
}

@article{hu2024routerbench,
  title={Routerbench: A benchmark for multi-llm routing system},
  author={Hu, Qitian Jason and Bieker, Jacob and Li, Xiuyu and Jiang, Nan and Keigwin, Benjamin and Ranganath, Gaurav and Keutzer, Kurt and Upadhyay, Shriyash Kaustubh},
  journal={arXiv preprint arXiv:2403.12031},
  year={2024}
}

@inproceedings{zhang2025router,
  title={Router-r1: Teaching llms multi-round routing and aggregation via reinforcement learning},
  author={Zhang, Haozhen and Feng, Tao and You, Jiaxuan},
  booktitle={The Thirty-ninth Annual Conference on Neural Information Processing Systems},
  year={2025}
}

@misc{hu2021loralowrankadaptationlarge,
      title={LoRA: Low-Rank Adaptation of Large Language Models}, 
      author={Edward J. Hu and Yelong Shen and Phillip Wallis and Zeyuan Allen-Zhu and Yuanzhi Li and Shean Wang and Lu Wang and Weizhu Chen},
      year={2021},
      eprint={2106.09685},
      archivePrefix={arXiv},
      primaryClass={cs.CL},
      url={https://arxiv.org/abs/2106.09685}, 
}

@article{blackwell1953equivalent,
  title={Equivalent comparisons of experiments},
  author={Blackwell, David},
  journal={The Annals of Mathematical Statistics},
  volume={24},
  number={2},
  pages={265--272},
  year={1953}
}

@article{degroot1962uncertainty,
  title={Uncertainty, information, and sequential experiments},
  author={DeGroot, Morris H.},
  journal={The Annals of Mathematical Statistics},
  volume={33},
  number={2},
  pages={404--419},
  year={1962}
}
\bibliographystyle{icml2026}

\newpage
\appendix
\onecolumn

\appendix

\section{Additional Experimental Settings}
\label{appendix: additional experimental settings}

\subsection{Data Collection}
\label{section:data collection}
Training value functions requires labeled trajectory data from weaker LLMs; since SWE-smith only ships \texttt{claude-3.7-sonnet} trajectories~\citep{yang2025swesmith}, we generate our own and release them. \Cref{tab:data collection} summarizes the collected splits.

\begin{table*}[t]
    \centering
    \footnotesize
    \setlength{\tabcolsep}{4pt}
    \begin{tabular}{l|cccc}
        \toprule
        Dataset / Split & \texttt{gpt-5-mini} & \texttt{deepseek-v3.2} & \texttt{gemini-3-pro} \\
        \midrule
        SWE-Smith Train        & 1{,}692 / 33.5 / \$70.1 & 1{,}755 / 31.9 / \$85.7 & ---                   \\
        SWE-Smith Val          & 203 / 31.0 / \$7.7      & 210 / 27.5 / \$8.9      & 210 / 32.3 / \$121.3       \\
        SWE-Smith Test         & 339 / 48.7 / \$11.0     & 346 / 42.7 / \$13.3     & 173 / 55.4 / \$95.6  \\
        SWE-Bench Verified     & 500 / 55.6 / \$16.6     & 500 / 62.4 / \$22.6     & 500 / 70.4 / \$391.3 \\
        \midrule
        \emph{Total cost (\$)} & 105.4                   & 130.5                   & 608.2                \\
        \bottomrule
    \end{tabular}
    \caption{Data collection (mix-1 constitution; \Cref{appendix: experiment designs}). Each cell shows \emph{instances} / \emph{resolved \%} / \emph{cost (USD)}, from running each LLM in mini-SWE-agent with a 75-step budget. ``---'' marks splits not collected for that LLM. \texttt{gemini-3-pro-preview} does not have training split because SWE-Router only needs it from weak models used for $m_1$.}
    \label{tab:data collection}
\end{table*}

\textbf{SWE-Smith instances.} We sub-sample SWE-smith into three repository-disjoint partitions: \emph{train} ($\sim$1.7k trajectories per weak LLM), \emph{val} ($\sim$210, used for threshold tuning), and \emph{test} ($\sim$170--346, drawn from yet another repository set). Resolved/unresolved labels come from running the SWE-smith unit tests against the agent's submitted patch.

\textbf{SWE-Bench instances.} For cross-distribution evaluation we collect trajectories on SWE-bench Verified ($500$ issues, all four LLMs). Under mix-1 (\Cref{appendix: experiment designs}), $4/5$ of these instances enter training, so the genuinely held-out slice consists of 100 instances.

\subsection{Experiment Designs}
\label{appendix: experiment designs}

\textbf{Mix-1 split constitution.} \emph{Train} merges SWE-Smith train trajectories with $4/5$ of a 5-fold-CV partition of SWE-Bench Verified. We construct a validation split using the tasks from SWE-Smith, while having two separate test split from each of SWE-Smith and SWE-Bench Verified.

\textbf{Value head and training.} The LoRA adapter for $\hat{r}_1$  uses $r{=}32$, $\alpha{=}64$, dropout $0.05$ on attention/MLP projections. For training, we use $5$ epochs, lr $5\times 10^{-5}$ cosine with $50$-step warmup, effective batch $16$, and context length limit of $8192$ tokens. We use a \emph{packed} procedure: one row per trajectory and the head is applied at every user-turn boundary $K\in\{0,\dots,K_{\max}\}$ via a single shared forward pass. Each $q$ is replaced by a uniformly-sampled paraphrase from a 3-way LLM rephrasing during training and validation, and we set $K_{\max}=4$.

\textbf{Baselines.} Three prompt-only routers in the \Cref{eq:router} family --- logistic regression, $k$-NN, and XGBoost on OpenAI \texttt{text-embedding-3-large} embeddings of $q$ --- and our \textbf{non-temporal router} ($K{=}0$ instance of our framework, conditioning only on $q$).

\section{Additional Experiment Results}
\label{app:additional-results}

\Cref{tab:full-results-auc-workshop} reports the full set of routing experiment results. Rows are grouped by the (weak, strong) model pair and list, in order: three prompt-only embedding baselines (\emph{Embed+LR}, \emph{Embed+kNN}, \emph{Embed+XGB}); our non-temporal variant ($K{=}0$); SWE-Router with fixed budgets $K\in\{1,2,3,4\}$. Metric semantics and the splits used in the experiment are explained in \Cref{section: experiments} and \Cref{appendix: additional experimental settings}.

\begin{table}[t!]
\centering
\caption{Threshold-free routing metrics (AUROC and Route-AUC) on three splits (SWE-Smith val, SWE-Smith test, SWE-Bench Verified test). ``---'' indicates an experiment not run or a metric not produced by the corresponding evaluation script. SWE-Router significantly outperforms all the baselines in SWE-Bench Verified test split in Route-AUC, while in SWE-Smith test split, the router trained with \texttt{gpt-5-mini} as $m_1$ does not show higher Route-AUC than the baselines. As the Route-AUC between the validation split and test split of SWE-Smith is not consistently correlated in the case of router with \texttt{deepseek-v3.2} as $m_1$ as well, we attribute this phenomenon to a distribution shift between the splits while separating them based on the code repositories the tasks are derived from.}
\label{tab:full-results-auc-workshop}
\footnotesize
\setlength{\tabcolsep}{4pt}
\begin{tabular}{l|ccc|ccc}
\toprule
& \multicolumn{3}{c|}{AUROC} & \multicolumn{3}{c}{Route-AUC}\\
Method & val & SS-test & SB-V & val & SS-test & SB-V \\
\midrule
\multicolumn{7}{l}{\textbf{$m_1$: \texttt{gpt-5-mini} \quad $m_2$: \texttt{gemini-3-pro-preview}}} \\
Embed+LR & 0.635 & 0.692 & 0.527 & 1.140 & 0.591 & 0.565 \\
Embed+kNN & 0.629 & 0.664 & 0.563 & -0.301 & 0.604 & 0.530 \\
Embed+XGB & 0.585 & 0.589 & 0.567 & 0.782 & 0.521 & 0.517 \\
Non-temporal Router (ours, $K{=}0$) & 0.599 & 0.702 & 0.613 & 1.644 & 0.626 & 0.549 \\
SWE-Router fixed ($K{=}1$) & 0.622 & 0.679 & 0.597 & 2.358 & 0.493 & 0.702 \\
SWE-Router fixed ($K{=}2$) & 0.634 & 0.678 & 0.594 & 1.928 & 0.482 & 0.703 \\
SWE-Router fixed ($K{=}3$) & 0.631 & 0.684 & 0.586 & 1.982 & 0.547 & 0.694 \\
SWE-Router fixed ($K{=}4$) & 0.636 & 0.685 & 0.617 & 2.243 & 0.546 & 0.709 \\
\midrule
\multicolumn{7}{l}{\textbf{$m_1$: \texttt{deepseek-v3.2} \quad $m_2$: \texttt{gemini-3-pro-preview}}} \\
Embed+LR & 0.628 & 0.660 & 0.477 & 0.455 & 0.508 & 0.476 \\
Embed+kNN & 0.600 & 0.590 & 0.489 & 0.385 & 0.444 & 0.371 \\
Embed+XGB & 0.591 & 0.618 & 0.526 & 0.430 & 0.555 & 0.465 \\
Non-temporal Router (ours, $K{=}0$) & 0.628 & 0.638 & 0.620 & 0.663 & 0.555 & 0.627 \\
SWE-Router fixed ($K{=}1$) & 0.658 & 0.649 & 0.597 & 0.646 & 0.571 & 0.768 \\
SWE-Router fixed ($K{=}2$) & 0.670 & 0.654 & 0.605 & 0.720 & 0.567 & 0.780 \\
SWE-Router fixed ($K{=}3$) & 0.666 & 0.655 & 0.603 & 0.738 & 0.547 & 0.750 \\
SWE-Router fixed ($K{=}4$) & 0.651 & 0.669 & 0.596 & 0.668 & 0.569 & 0.718 \\
\bottomrule
\end{tabular}
\end{table}

\clearpage
\section{Proof of \Cref{thm:more-information}}\label{app:proof-1}

\begin{proof}
Let $\Delta \coloneqq U_1-U_2$. Using the identity
\[
\max\{a,b\} = \tfrac12(a+b+|a-b|),
\]
we obtain, for any signal $Z$,
\[
V(Z) \;=\; \tfrac12 \E[U_1+U_2] \;+\; \tfrac12\,\E\!\left[\,\big|\E[\Delta \mid Z]\big|\,\right].
\]
The first term does not depend on $Z$, so
\[
V(S_t) - V(Q) \;=\; \tfrac12 \left(\E\!\left[|\E[\Delta \mid S_t]|\right] \;-\; \E\!\left[|\E[\Delta \mid Q]|\right]\right).
\]
Since $Q$ is a function of $S_t$, the tower property of conditional expectation gives
\[
\E[\Delta \mid Q] \;=\; \E\!\left[\,\E[\Delta \mid S_t] \,\big|\, Q\,\right].
\]
Applying Jensen's inequality with the convex function $\varphi(x)=|x|$ conditionally on $Q$,
\[
\big|\E[\Delta \mid S_t] \mid Q\,\big| \;\le\; \E\!\left[\,|\E[\Delta \mid S_t]| \,\big|\, Q\,\right],
\]
and taking outer expectations on both sides yields
\[
\E\!\left[|\E[\Delta\mid Q]|\right] \;\le\; \E\!\left[|\E[\Delta\mid S_t]|\right].
\]
Hence $V(S_t)\ge V(Q)$, and equality holds iff $\E[\Delta\mid S_t]$ is $\sigma(Q)$-measurable, i.e., the partial trajectory carries no information about $\Delta$ beyond the prompt.
\end{proof}

\section{Additional Related Work}
\label{appendix: Additional Related Work}

\textbf{LLMs for software engineering.}
OpenHands~\citep{wang2024openhands} provides an open platform for generalist software agents. Other approaches localize and patch bugs via program structure~\citep{zhang2024autocoderover}, simple two-stage pipelines~\citep{xia2024agentless}, planning~\citep{bairi2024codeplan}, multi-agent testing~\citep{huang2023agentcoder}, or executable code actions~\citep{wang2024executable}. For training, SWE-smith~\citep{yang2025swesmith} synthesizes bug-and-fix data by injecting defects into repositories, and Agent-RLVR~\citep{Da2025AgentRLVRTS} uses execution-based RL rewards. SWE-RM~\citep{shum2026swerm} instead provides execution-free reward signals for patch scoring; our value head is related but operates on \emph{partial} trajectories.

\textbf{Software engineering benchmarks.}
Repository-level evaluations often involve cross-file completion~\citep{ding2023crosscodeeval,liu2023repobench} and complex tool use~\citep{zhuo2024bigcodebench}. SWE-bench~\citep{jimenez2024swebench} now has extensions to multimodal~\citep{yang2024swebenchmm}, multilingual~\citep{zan2024multiswe}, higher-rigor~\citep{aleithan2024swe}, live~\citep{zhang2025swebench}, and performance~\citep{He2025SWEPerfCL} settings. Since these tasks are scraped from public code, contamination is a recurring concern~\citep{xu2024benchmark,deng2024investigating}, motivating contamination-resistant alternatives~\citep{White2024LiveBenchAC,Zhang2024ACE}; SWE-bench Pro~\citep{deng2025swe} addresses this with private repositories.

\textbf{LLM routing.}
Existing approaches to routing between LLMs combine models of differing capability via hybrid routing~\citep{ding2024hybrid}, mixing~\citep{aggarwal2023automix}, meta-modeling~\citep{vsakota2024fly}, or output ensembling~\citep{jiang2023llm}. Subsequent routers use contrastive embeddings~\citep{chen2024routerdc}, real-time prompt classification~\citep{hari2023tryage}, graph formulations~\citep{feng2024graphrouter}, and explicit cost or budget controls~\citep{shirkavand2025cost,somerstep2025carrot,mei2025omnirouter,feng2025ipr}; others target reasoning vs.\ non-reasoning routing~\citep{ma2025reasoning,pan2025route,shao2025route,wang2025reason,Guo2026ThinkWN} and thought-level cascades~\citep{yue2024large}. To handle unseen models, EmbedLLM~\citep{zhuang2025embedllm} and universal model routing~\citep{jitkrittum2025universal} embed both prompts and LLMs so routers generalize beyond training. RL-based routers~\citep{zhang2025router,song2025irt} learn multi-round or interpretable selection policies, and benchmarks~\citep{hu2024routerbench,huang2025routereval,lu2025routerarena} consolidate the field.

\end{document}